\documentclass{article}
\usepackage{spconf}
\usepackage[utf8]{inputenc}
\usepackage[T1]{fontenc}

\usepackage{cite}

\usepackage{amsmath,amssymb,amsthm,bm,xfrac}
\DeclareMathSymbol{\shortminus}{\mathbin}{AMSa}{"39}
\newtheorem{theorem}{Theorem}

\usepackage{mathtools}

\usepackage{graphicx,subcaption}
\usepackage{booktabs,multirow}

\usepackage{pgfplots}
\pgfplotsset{compat=1.17}

\usepackage{xcolor}
\definecolor{comb_lapl}{rgb}{0,0.4470,0.7410}
\definecolor{rw_lapl}{rgb}{0.8500,0.3250,0.0980}
\definecolor{voro_lapl}{rgb}{0.9290,0.6940,0.1250}

\usepackage[inline]{enumitem}
\newlist{inlinelist}{enumerate*}{1}
\setlist*[inlinelist,1]{label=\roman*),itemjoin={{, }},itemjoin*={{, and }}}

\usepackage{hyperref}
\def\equationautorefname~#1\null{(#1)\null}

\usepackage{xspace}

\def\qmin{\ensuremath{q_{\mathrm{min}}}}

\DeclareMathOperator{\diag}{diag}
\DeclareMathOperator{\trace}{tr}
\DeclareMathOperator{\logdet}{logdet}

\gdef\eusipco{0}

\def\resp{\textit{resp.}\xspace}

\title{Joint Graph and Vertex Importance Learning}

\name{Benjamin Girault\textsuperscript{*}, Eduardo Pavez\textsuperscript{$\dagger$}, Antonio Ortega\textsuperscript{$\dagger$}}

\address{\textsuperscript{*} Université de Rennes, ENSAI, CNRS, CREST-UMR 9194, Rennes, France\\\textsuperscript{$\dagger$} Signal and Image Processing Institute, University of Southern California, Los Angeles, USA}

\begin{document}
\setlength{\abovedisplayskip}{4pt}
\setlength{\belowdisplayskip}{4pt}
\maketitle
\begin{abstract}
In this paper, we explore the topic of graph learning from the perspective of the Irregularity-Aware Graph Fourier Transform, with the goal of learning the graph signal space inner product to better model data.
We propose a novel method to learn a graph with smaller edge weight upper bounds compared to combinatorial Laplacian approaches.
Experimentally, our approach yields much sparser graphs compared to a combinatorial Laplacian approach, with a more interpretable model.
\end{abstract}
\begin{keywords}
Graph signal processing, graph learning, graph signal Hilbert space
\end{keywords}

\section{Introduction}
\label{sec:intro}

The field of graph signal processing proposes a toolbox to analyze, process and transform data supported by arbitrary discrete structures \cite{Ortega.PROCIEEE,Ortega.BOOK.2022}.
Examples include data collected by sensor networks, activities in a human neuronal network, or image processing \cite{Ortega.PROCIEEE}.
However, for some applications, we have access to the data, but the graph structure is either missing, noisy, or incomplete.
In this case, it is important to learn a graph that provides a model for the data that can be leveraged by the graph signal processing toolbox to process or analyze the data.
To that end, graph learning is typically formulated as an optimization problem, where the goal is to obtain an algebraic representation of the graph that best captures the variability of the data.
Popular data fidelity objectives include graph signal smoothness and stationarity \cite{dong2019learning,Mateos.SPMAG.2019}.

Following \cite{Egilmez.JSTSP.2017}, graph learning can be formulated as an inverse covariance estimation problem with graph Laplacian constraints.
In this context, the most common type of Laplacian is the \emph{Combinatorial Graph Laplacian} (CGL).
The CGL variation $\Delta(\mathbf{x})=\mathbf{x}^*\mathbf{L}\mathbf{x}=\frac{1}{2}\sum_{ij}w_{ij}|x_i-x_j|^2$ exhibits the desirable property of constant signals having zero variation, since only variations along edges are used \cite{Ortega.PROCIEEE}.
In addition, the \emph{Graph Fourier Transform} (GFT) decomposes a signal on an orthonormal basis that minimizes this variation \cite{Girault.TSP.2018}.
Learning a CGL leads then to a consistent spectral interpretation given by the \emph{graph Power Spectrum Density} (gPSD).
Ultimately, learning a CGL and having access to such a gPSD allows for specific filter designs such as Wiener filters \cite{Girault.ICASSP.2014} or ARMA filters \cite{Marques.TSP.2017}.
To account for the difficulty associated with singular CGL matrices in inverse covariance estimation, the objective function is oftentimes modified \cite{Egilmez.JSTSP.2017,Pavez.ASILOMAR.2019,Pavez.AISTATS.2022,ying2020nonconvex,ying2021minimax}.
However, such an approach produces dense graphs, even if variables are weakly correlated (see \autoref{sec:experiments} and \cite{ying2020nonconvex}) because the modified objective function encourages well connected graphs \cite{Pavez.ASILOMAR.2019}.
This issue can be solved by incorporating non-convex sparse regularization \cite{ying2020nonconvex,koyakumaru2023learning} at the expense of a more complex graph learning algorithm.

This strict CGL approach can be relaxed by allowing for the estimated inverse covariance to be a CGL plus a diagonal matrix  \cite{Egilmez.JSTSP.2017}.
The resulting matrices are the \emph{Generalized Graph Laplacian} (GGL) and the \emph{Diagonally Dominant Graph Laplacian} (DDGL).
Many graph learning algorithms for GGL matrices have been proposed \cite{Egilmez.JSTSP.2017,Pavez.ICASSP.2016,slawski2015estimation,ying2022adaptive}, which have been shown not to produce the spurious connections that often arise in CGL approaches \cite{Pavez.TSP.2018}.
However, the graph variation loses its classical interpretations with constant signals  exhibiting non-zero variations since an additional term accounting for the signal weighted magnitude on each vertex.

In this work we propose a new formulation that allows us to learn CGLs without spurious connections while preserving interpretability, in contrast with the relaxed Laplacians above for the DDGL case.
Our approach to the inverse covariance problem uses  our recently introduced generalization of the \emph{Graph Fourier Transform} (GFT) to arbitrary Hilbert spaces of graph signals: the \emph{Irregularity Aware Graph Fourier Transform} \cite{Girault.TSP.2018}.
In this generalization, the space of graph signals is equipped with an inner product other than the standard dot product, thus adapting the notion of \emph{orthogonality} and \emph{norm} of these graph signals to the application.
This additional parameter to the GFT shows great promise in areas such as vertex sampling \cite{Girault.ICASSP.2020}, image \cite{Lu.ICIP.2020} or point cloud processing \cite{Pavez.ICIP.2020}, and filter design \cite{Pavez.ARXIV.2022}.
In the context of graph learning, we propose to \textit{jointly learn a CGL and an inner product}, which corresponds to a diagonal matrix whose diagonal terms reflect the relative importances of the vertices \cite{Girault.TSP.2018}.

Our contributions are threefold:
\begin{inlinelist}
    \item we formulate a joint CGL and inner product learning problem and show that it can be reduced to learning a DDGL (\autoref{sec:proposed:model}), allowing for the CGL and the vertex importance weights to be learned using any DDGL algorithm \cite{Egilmez.JSTSP.2017,ying2022adaptive}
    \item we propose an efficient and scalable coordinate minimization algorithm similar to \cite{Pavez.ASILOMAR.2019} for the CGL that updates one edge weight or vertex importance per iteration (\autoref{sec:proposed:method})
    \item we prove that the proposed CGL and inner product solution is sparse and obtain a sharp upper bound for the non zero edge weights (\autoref{sec:proposed:edge_prop}).
\end{inlinelist}

Our experiments with sampled intrinsic stationary continuous signals highlight the key benefits of our approach, including spatial consistency and a substantial increase of sparsity compared to learning a CGL (\autoref{sec:experiments}).

\section{Background}

\subsection{Graph Signal Processing}

A graph $\mathcal{G}=(\mathcal{V},\mathcal{E},w)$ is defined by a set of vertices $\mathcal{V}$, a set of edges $\mathcal{E}\subseteq\mathcal{V}\times\mathcal{V}$ and an edge weight function $w:\mathcal{E}\rightarrow\mathbb{R}_+$.
We denote by $N$ (\textit{resp.} $M$) the number of vertices (\textit{resp.} edges).
In this paper we are interested in undirected graphs where for any edge $e=(i,j)\in\mathcal{E}$, its opposite $(j,i)$ is also an edge in $\mathcal{E}$ with identical weight $w(i,j)=w(j,i)$.

Algebraic representations of such graphs include the classical adjacency or weight matrix $\mathbf{A}$ such that $\mathbf{A}_{ij}=w(i,j)$ if $ij\in\mathcal{E}$, 0 otherwise.
The degree of a vertex is defined as the sum $d_i=\sum_{ij\in\mathcal{E}} w(i,j)$ of its incident edges, and these degrees are collected into the diagonal degree matrix $\mathbf{D}=\diag(d_1,\dots,d_N)$.
This allows to define the combinatorial Laplacian matrix $\mathbf{L}=\mathbf{D}-\mathbf{A}$ of the graph.

Using the unweighted incidence matrix $\mathbf{B}=[\mathbf{b}_1 \cdots \mathbf{b}_M]$ such that for any edge $e=(i,j)\in\mathcal{E}$, such that $i<j$, then $\mathbf{B}_{i,e}=1$ and $\mathbf{B}_{j,e}=-1$, and $\mathbf{B}$ is zero elsewhere, we have $\mathbf{L}=\mathbf{B}\mathbf{W}\mathbf{B}^{\smash{T}}$, where $\mathbf{W}=\diag(w_1,\dots,w_M)$ is the diagonal matrix collecting edge weights.

A graph signal $x:\mathcal{V}\rightarrow\mathbb{C}$ is a function mapping vertices to scalar values.
We assumed an indexing of the vertices with integers $\{1,\dots,N\}$ (and of the edges with $\{1,\dots,M\}$) to define the algebraic representations above.
A graph signal $\mathbf{x}$ is then represented by a column vector $[x_1,\dots,x_N]^T\in\mathbb{C}^N$.

\subsection{Irregularity Aware Graph Fourier Transform}

The Irregularity Aware Graph Fourier Transform is a parametric generalisation of the orthonormal Graph Fourier Transforms (GFT) using two parameters~\cite{Girault.TSP.2018}: the graph variation operator $\Delta:\mathbb{C}^N\rightarrow\mathbb{R}_+$ mapping any graph signal $\mathbf{x}$ to its non-negative variation $\Delta(\mathbf{x})\geq 0$, and an inner product $\langle .,.\rangle_\mathbf{Q}:\mathbb{C}^N\times\mathbb{C}^N\rightarrow\mathbb{R}$ with Hermitian positive definite matrix $\mathbf{Q}$, such that $\langle \mathbf{x},\mathbf{y}\rangle_\mathbf{Q}=\mathbf{y}^*\mathbf{Q}\mathbf{x}$.
We denote this as the  $(\Delta,\mathbf{Q})$-GFT.
With $\Delta(\mathbf{x})=\mathbf{x}^*\mathbf{M}\mathbf{x}$ and $\mathbf{M}$ a Hermitian semi-definite positive matrix, its graph Fourier modes $\{u_l\}_l$ verify $\mathbf{M}\mathbf{u_l}=\lambda_l\mathbf{Q}\mathbf{u_l}$.
The graph Fourier basis $\{u_l\}_l$ is then orthonormal with respect to the $\mathbf{Q}$-inner product.
Collecting all the eigenvalues in the diagonal matrix $\mathbf{\Lambda}=\diag(\lambda_0,\dots,\lambda_{N-1})$, and graph Fourier modes in $\mathbf{U}=[u_0 \dots u_{N-1}]$ (and we have $\mathbf{U}^*\mathbf{Q}\mathbf{U}=\mathbf{I}$).
We obtain the inverse and forward $(\Delta,\mathbf{Q})$-GFT of a graph signal $\mathbf{x}$ with
$\mathbf{x} = \mathbf{F}^{-1}\mathbf{\widehat{x}} = \mathbf{U}\mathbf{\widehat{x}}$ and $\mathbf{\widehat{x}} = \mathbf{F}\mathbf{x} = \mathbf{U}^*\mathbf{Q}\mathbf{x}$.

In this paper, we use the combinatorial Laplacian variation $\Delta(\mathbf{x})=\mathbf{x}^*\mathbf{L}\mathbf{x}$ ($\mathbf{M}=\mathbf{L}$), and further assume that the inner product matrix is diagonal $\mathbf{Q}=\diag(q_1,\dots,q_N)$.
This effectively adds vertex \emph{importances} to the model alongside edge weights.
Non-diagonal inner product matrices $\mathbf{Q}$ in the context of graph learning will be studied in a future communication.

\subsection{Graph Wide Sense Stationarity (gWSS)}

Our goal is to obtain a stochastic model for the data at hand.
To that end, we use as class of models the framework of Graph Wide Sense Stationarity (gWSS) we previously introduced and its spectral characterization \cite{Girault.EUSIPCO.2015}: A stochastic graph signal is \emph{gWSS} if and only if its mean is a DC component and its spectral components are uncorrelated.
We denote by $\bm\Sigma$ the covariance matrix of the graph signal, $\bm\Gamma$ its spectral covariance matrix, and $\bm\gamma$ its diagonal, called the graph Power Spectrum Density (gPSD) \cite{Girault.EUSIPCO.2015}.
In other words, $\bm\gamma_l$ is the variance of the $l^{\text{th}}$ spectral component $\mathbf{\widehat{x}}_l$ of the signal $\mathbf{x}$.

\subsection{Graph Learning through Coordinate Minimization}
\label{sec:asilomar}

Assume that the graph signals are realizations of a zero-mean Gaussian gWSS with gPSD $\gamma(0)=0$ and $\gamma(\lambda)=1/\lambda$ when $\lambda>0$.
Using the $(\mathbf{L},\mathbf{I})$-GFT, this translates into the covariance matrix $\bm\Sigma=\mathbf{L}^\dagger$.
Under this setting we obtain the maximum likelihood estimator proposed by \cite{Egilmez.JSTSP.2017} (see \autoref{sec:intro}), which minimizes the following cost function:
\begin{equation}\label{eq:asilomar_cost}
  F(\mathbf{L}) =
    -\logdet\left(\mathbf{L}+\sfrac{1}{N}\mathbf{J}\right)
    +\trace\left(\mathbf{L}\mathbf{S}\right)
  \text{,}
\end{equation}
where $\mathbf{J}$ is the all-one matrix (necessary, since a valid $\mathbf{L}$ is always singular), and $\mathbf{S}=\frac{1}{K}\sum_k\mathbf{x^{(k)}}\mathbf{x^{(k)}}^*$ is the empirical covariance matrix.
The coordinate minimization approach of \cite{Pavez.ASILOMAR.2019} minimizes $F$ iteratively for all edge weights.
The optimal update (fixing all other weights) $\delta_e$ to edge weight $w_e$ is:
\begin{equation}\label{eq:update_asilomar}
  \delta_e=\max(-w_e, \sfrac{1}{h_e}-\sfrac{1}{r_e})
  \text{,}
\end{equation}
with the edge cost $h_e=\mathbf{b_e}^{\smash{T}}\mathbf{S}\mathbf{b_e}$, and the effective resistance $r_e=\mathbf{b_e}^T(\mathbf{L}+\sfrac{1}{N}\mathbf{J})^{-1}\mathbf{b_e}$ for an edge $e$.
The graph learning algorithm updates iteratively all edge weights using \autoref{eq:update_asilomar}, and stops when the update to $F$ after all the edge weight updates is below a stopping threshold.
The optimal graph weights are upper bounded by $w_e\leq\sfrac{1}{h_e}$ \cite{Pavez.ASILOMAR.2019}.

\section{Proposed Graph Learning Approach}
\label{sec:proposed}

\subsection{Data Model}
\label{sec:proposed:model}

We assume that the data we have is i.i.d. multivariate Gaussian.
Our goal is to model the data with a graph given by its combinatorial graph Laplacian $\mathbf{L}$ and the graph signal inner product matrix $\mathbf{Q}=\diag(\mathbf{q})$.
To that end, we learn this graph such that the data is gWSS on the graph, with gPSD $\gamma(\lambda)=(1+\lambda)^{-1}$ when using the $(\mathbf{L},\mathbf{Q})$-GFT.
Such a gPSD allows for a continuous gPSD around frequency 0 compared to the classical $\gamma(\lambda)=1/\lambda$.

\begin{theorem}\label{thm:proposed_covariance}
    The covariance matrix of a stochastic graph signal with gPSD $\gamma(\lambda)=(1+\lambda)^{-1}$ using the $(\mathbf{L},\mathbf{Q})$-GFT is:
    \begin{equation}\label{eq:proposed_covariance}
        \bm\Sigma = \left[\mathbf{Q}+\mathbf{L}\right]^{-1}
        \text{.}
    \end{equation}
\end{theorem}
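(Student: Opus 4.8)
The plan is to compute the vertex-domain covariance $\bm\Sigma$ from the defining property of gWSS, namely that the spectral covariance $\bm\Gamma$ is diagonal with entries given by the gPSD. First I would record the two structural identities satisfied by the $(\mathbf{L},\mathbf{Q})$-GFT basis: the $\mathbf{Q}$-orthonormality $\mathbf{U}^*\mathbf{Q}\mathbf{U}=\mathbf{I}$ stated in the excerpt, and the diagonalisation $\mathbf{U}^*\mathbf{L}\mathbf{U}=\bm\Lambda$, which follows by writing the generalised eigenvalue equation $\mathbf{L}\mathbf{u}_l=\lambda_l\mathbf{Q}\mathbf{u}_l$ in matrix form as $\mathbf{L}\mathbf{U}=\mathbf{Q}\mathbf{U}\bm\Lambda$ and left-multiplying by $\mathbf{U}^*$. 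Adding these yields $\mathbf{U}^*(\mathbf{Q}+\mathbf{L})\mathbf{U}=\mathbf{I}+\bm\Lambda$, and hence the compact form $(\mathbf{Q}+\mathbf{L})^{-1}=\mathbf{U}(\mathbf{I}+\bm\Lambda)^{-1}\mathbf{U}^*$ that I aim to recover.

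Next I would relate the spectral and vertex-domain covariances through the forward transform $\mathbf{\widehat{x}}=\mathbf{U}^*\mathbf{Q}\mathbf{x}$. For a zero-mean signal this gives $\bm\Gamma=\mathbb{E}[\mathbf{\widehat{x}}\mathbf{\widehat{x}}^*]=\mathbf{U}^*\mathbf{Q}\,\mathbb{E}[\mathbf{x}\mathbf{x}^*]\,\mathbf{Q}\mathbf{U}=\mathbf{U}^*\mathbf{Q}\bm\Sigma\mathbf{Q}\mathbf{U}$. The gWSS assumption forces $\bm\Gamma$ to be the diagonal matrix of the gPSD values $\gamma(\lambda_l)=(1+\lambda_l)^{-1}$, i.e. $\bm\Gamma=(\mathbf{I}+\bm\Lambda)^{-1}$. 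Equating the two expressions for $\bm\Gamma$ leaves the equation $\mathbf{U}^*\mathbf{Q}\bm\Sigma\mathbf{Q}\mathbf{U}=(\mathbf{I}+\bm\Lambda)^{-1}$, which I would then solve for $\bm\Sigma$.

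To invert this relation I would use that $\mathbf{Q}$-orthonormality makes $\mathbf{U}$ invertible with $\mathbf{U}^{-1}=\mathbf{U}^*\mathbf{Q}$, equivalently $\mathbf{U}\mathbf{U}^*=\mathbf{Q}^{-1}$. Either by left- and right-multiplying by the appropriate inverses, or more cleanly by substituting the candidate $\bm\Sigma=(\mathbf{Q}+\mathbf{L})^{-1}=\mathbf{U}(\mathbf{I}+\bm\Lambda)^{-1}\mathbf{U}^*$ from the first step and simplifying with $\mathbf{U}^*\mathbf{Q}\mathbf{U}=\mathbf{I}$, the product collapses back to $(\mathbf{I}+\bm\Lambda)^{-1}$ and confirms the claim.

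The computation is short, so the main obstacle is bookkeeping rather than depth: because the transform is built on the non-standard $\mathbf{Q}$-inner product, $\mathbf{U}$ is $\mathbf{Q}$-orthonormal rather than unitary, so one must consistently carry the $\mathbf{Q}$ factors that appear on both sides of $\bm\Gamma=\mathbf{U}^*\mathbf{Q}\bm\Sigma\mathbf{Q}\mathbf{U}$ and use $\mathbf{U}^{-1}=\mathbf{U}^*\mathbf{Q}$ in place of $\mathbf{U}^{-1}=\mathbf{U}^*$. Keeping these straight is precisely what makes the $\mathbf{Q}$ terms cancel and the Laplacian contribution combine into $\mathbf{Q}+\mathbf{L}$.
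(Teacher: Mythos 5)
Your proposal is correct and follows essentially the same route as the paper: both start from $\bm\Gamma=(\mathbf{I}+\bm\Lambda)^{-1}$ (spectral covariance diagonal with entries given by the gPSD), transport it back to the vertex domain through the forward/inverse $(\mathbf{L},\mathbf{Q})$-GFT, and use the $\mathbf{Q}$-orthonormality $\mathbf{U}^*\mathbf{Q}\mathbf{U}=\mathbf{I}$ together with $\mathbf{U}^*\mathbf{L}\mathbf{U}=\bm\Lambda$ to collapse the result into $(\mathbf{Q}+\mathbf{L})^{-1}$. The paper phrases the same computation more compactly as $\bm\Sigma=\mathbf{F}^{-1}\bm\Gamma\mathbf{F}\mathbf{Q}^{-1}=[\mathbf{I}+\mathbf{Q}^{-1}\mathbf{L}]^{-1}\mathbf{Q}^{-1}$ and additionally remarks that $\mathbf{Q}\succ 0$ and $\mathbf{L}\succcurlyeq 0$ make $\mathbf{Q}+\mathbf{L}$ invertible, a point worth adding to your write-up.
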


\begin{proof}
    We first observe that the spectral covariance matrix verifies $\mathbf{\Gamma}=[\mathbf{I}+\mathbf{\Lambda}]^{-1}$.
    Using the $(\mathbf{L},\mathbf{Q})$-GFT, we obtain $\mathbf{\Sigma}=\mathbb{E}[\mathbf{x}\mathbf{x}^*]=\mathbf{F}^{-1}\mathbf{\Gamma}\mathbf{F}\mathbf{Q}^{-1}=[\mathbf{I}+\mathbf{Q}^{-1}\mathbf{L}]^{-1}\mathbf{Q}^{-1}=[\mathbf{Q}+\mathbf{L}]^{-1}$.
    Given that $\mathbf{Q}$ is Hermitian positive definite and $\mathbf{L}$ is Hermitian positive semi-definite, then $\mathbf{Q}+\mathbf{L}$ is definite, hence invertible and $\bm{\Sigma}$ is well-defined.
\end{proof}

Using maximum likelihood, the Gaussian assumption and \autoref{thm:proposed_covariance} lead to following cost function to minimize:
\begin{equation}\label{eq:proposed_loglikelihood}
  F(\mathbf{L}, \mathbf{Q})=
    -\logdet(\mathbf{Q}+\mathbf{L})
    +\trace((\mathbf{Q}+\mathbf{L})\mathbf{S})
  \text{,}
\end{equation}
with $\mathbf{S}$ the empirical covariance matrix of the data.
Interestingly, the matrix $\mathbf{Q}+\mathbf{L}$ in \autoref{eq:proposed_loglikelihood} also corresponds to the \emph{Diagonally Dominant Graph Laplacian} proposed in \cite{Egilmez.JSTSP.2017}, but with strict dominance of the diagonal.
However, our spectral interpretation to this model using the $(\mathbf{L},\mathbf{Q})$-GFT is different.

Note also that the cost function \eqref{eq:proposed_loglikelihood} is actually a generalization of \cite{lake2010discovering} for any positive definite matrix $\mathbf{Q}$ instead of only scaled versions of the identity matrix $\frac{1}{\sigma^2}\mathbf{I}$.
The additional $\ell_1$ penalty term of \cite{lake2010discovering} and the difference between $\mathbf{Q}$ and $\frac{1}{\sigma^2}\mathbf{I}$ will be studied in the future communication.

\subsection{Proposed Coordinate Minimization Approach}
\label{sec:proposed:method}

We propose to solve the following graph learning problem:
\begin{align}\label{eq:proposed_problem}
  \min_{\mathbf{w}\geq\mathbf{0}, \mathbf{q}>\mathbf{0}} &
    F\Bigl(\mathbf{B}\diag(\mathbf{w})\mathbf{B}^T, \diag(\mathbf{q})\Bigr)
  \text{.}
\end{align}
Note that the non-negativity of $\mathbf{w}$ and positivity of $\mathbf{q}$ are enough to ensure that $\mathbf{L}$ is Hermitian semi-definite positive and $\mathbf{Q}$ is Hermitian definite positive, such that the $(\mathbf{L}, \mathbf{Q})$-GFT is well-defined.
However, for our implementation, we need to introduce a hyperparameter $\qmin$ and change the positivity constraint to $\forall i,q_i\geq \qmin$.
This is justified below, in the update formula for $q_i$.

Coordinate minimization iterates through all edges and vertices and updates these weights according to \autoref{thm:coord_min}.

\begin{theorem}[Coordinate Minimization Update]\label{thm:coord_min}
  \autoref{eq:proposed_problem} is solved by, at iteration $t$, updating either $w_e$ or $q_i$ using:
  \begin{align*}
    \delta_e^{(t\mathrlap{)}} &= \max(\shortminus w_e^{(t\mathrlap{)}}, {\textstyle \frac{1}{h_e} \shortminus \frac{1}{r_e^{(t\mathrlap{)}}}}) &
    \delta_i^{(t\mathrlap{)}} &= \max(q_{\text{min}}\shortminus q_i^{(t\mathrlap{)}}, {\textstyle \frac{1}{p_i} \shortminus \frac{1}{u_i^{(t\mathrlap{)}}}})
  \end{align*}
  with edge cost $h_e=\mathbf{b_e}^{\smash{T}}\mathbf{S}\mathbf{b_e}$, effective resistance $r_e^{\smash{(t)}}=\mathbf{b_e}^{\smash{T}}(\mathbf{Q}^{\smash{(t)}}+\mathbf{L}^{\smash{(t)}})^{-1}\mathbf{b_e}$, vertex cost $p_i=\mathbf{S}_{ii}$, vertex effective importance $u_i^{\smash{(t)}}=\left[(\mathbf{Q}^{\smash{(t)}}+\mathbf{L}^{\smash{(t)}})^{-1}\right]_{ii}$, and $\qmin>0$.
\end{theorem}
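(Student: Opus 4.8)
The plan is to exploit the fact that both coordinate updates are \emph{rank-one} perturbations of the matrix $\mathbf{K}=\mathbf{Q}+\mathbf{L}$, which reduces each update to a one-dimensional convex minimization that generalizes the argument of \cite{Pavez.ASILOMAR.2019}. Writing the cost as $F=-\logdet(\mathbf{K})+\trace(\mathbf{K}\mathbf{S})$, I first note that perturbing a single edge weight $w_e$ by $\delta$ changes $\mathbf{L}$, and hence $\mathbf{K}$, by $\delta\,\mathbf{b_e}\mathbf{b_e}^T$, whereas perturbing a single vertex importance $q_i$ by $\delta$ changes $\mathbf{Q}$, and hence $\mathbf{K}$, by $\delta\,\mathbf{e}_i\mathbf{e}_i^T$ with $\mathbf{e}_i$ the $i$-th canonical vector. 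Both have the common form $\mathbf{K}\mapsto\mathbf{K}+\delta\,\mathbf{v}\mathbf{v}^T$, so one derivation with $\mathbf{v}\in\{\mathbf{b_e},\mathbf{e}_i\}$ handles both cases. A pleasant simplification over the background setting is that here $\mathbf{K}$ is already positive definite by \autoref{thm:proposed_covariance}, so no regularizing $\tfrac{1}{N}\mathbf{J}$ term is needed.

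Next I would evaluate the perturbed cost in closed form. The matrix determinant lemma gives $\logdet(\mathbf{K}+\delta\,\mathbf{v}\mathbf{v}^T)=\logdet(\mathbf{K})+\log(1+\delta\,\mathbf{v}^T\mathbf{K}^{-1}\mathbf{v})$, which is legitimate since $\mathbf{K}$ is invertible, while linearity of the trace gives $\trace((\mathbf{K}+\delta\,\mathbf{v}\mathbf{v}^T)\mathbf{S})=\trace(\mathbf{K}\mathbf{S})+\delta\,\mathbf{v}^T\mathbf{S}\mathbf{v}$. Discarding the $\delta$-independent constants, the update amounts to minimizing the scalar function $g(\delta)=-\log(1+\delta\rho)+\delta\sigma$, where $\rho=\mathbf{v}^T\mathbf{K}^{-1}\mathbf{v}$ and $\sigma=\mathbf{v}^T\mathbf{S}\mathbf{v}$. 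Substituting $\mathbf{v}=\mathbf{b_e}$ identifies $\rho=r_e^{(t)}$ and $\sigma=h_e$, and substituting $\mathbf{v}=\mathbf{e}_i$ identifies $\rho=u_i^{(t)}$ and $\sigma=p_i$, exactly matching the effective resistance, edge cost, vertex effective importance, and vertex cost defined in the statement.

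Finally, I would solve the one-dimensional problem and impose feasibility. The function $g$ is convex, since $-\log(1+\delta\rho)$ is convex on its domain and $\delta\sigma$ is linear; setting $g'(\delta)=-\rho/(1+\delta\rho)+\sigma=0$ and rearranging yields the unconstrained optimum $\delta^\star=\tfrac{1}{\sigma}-\tfrac{1}{\rho}$, i.e. $\tfrac{1}{h_e}-\tfrac{1}{r_e^{(t)}}$ for edges and $\tfrac{1}{p_i}-\tfrac{1}{u_i^{(t)}}$ for vertices. The feasibility constraint is $w_e+\delta\geq 0$ for edges and $q_i+\delta\geq\qmin$ for vertices, giving a lower bound $\delta\geq-w_e$ (\resp $\delta\geq\qmin-q_i$); by convexity the constrained minimizer is the unconstrained optimum clamped to this bound, which produces the two $\max$ expressions. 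The one point deserving care is well-posedness: I would check that staying in the feasible set keeps $\mathbf{K}+\delta\,\mathbf{v}\mathbf{v}^T$ positive definite, equivalently $1+\delta\rho>0$, so that $g$ and its minimizer are defined. This follows because a feasible $\delta$ preserves $w_e\geq 0$ (\resp $q_i\geq\qmin>0$) and hence keeps $\mathbf{K}$ a sum of a positive definite and a positive semi-definite matrix; the only remaining bookkeeping is tracking the incidence-vector sign conventions so that $\rho$ and $\sigma$ coincide with the stated quantities.
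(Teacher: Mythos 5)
Your proof is correct and takes essentially the same route as the paper's: the paper likewise reduces each coordinate update to minimizing the one-dimensional function $-\log(1+\delta\rho)+\delta\sigma$ (deferring the edge case to \cite{Pavez.ASILOMAR.2019} with $\frac{1}{N}\mathbf{J}$ replaced by $\mathbf{Q}$, and writing the KKT conditions explicitly for the vertex case), which is equivalent to your convexity-and-clamping argument. Your unified rank-one treatment via $\mathbf{v}\in\{\mathbf{b_e},\mathbf{e}_i\}$ is simply a more self-contained packaging of the same computation.
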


\begin{proof}
  Notice first how the trace term in \autoref{eq:proposed_loglikelihood} can be decomposed as a sum $\trace(\mathbf{LS})+\trace(\mathbf{QS})$.
  Any update to $w_e$ (\resp $q_i$) will only modify the first (\resp second) trace in this sum.

  The derivation for edge weight update $\delta_e$ is identical to \cite{Pavez.ASILOMAR.2019} by changing $\frac{1}{N}\mathbf{J}$ to $\mathbf{Q}$ in the effective resistance $r_e$ definition.

  For the update $\delta_i$ to vertex importance $q_i$, we solve:
  \[
    \Delta F=-\log\left(1+\delta_i\left[(\mathbf{Q}+\mathbf{L})^{-1}\right]_{ii}\right)+\delta_i\mathbf{S}_{ii}
    \text{,}
  \]
  where the first term is derived from the log-determinant variation and the second from the trace variation.
  Let $p_i=\mathbf{S}_{ii}$, and $u_i=\left[(\mathbf{L}+\mathbf{Q})^{-1}\right]_{ii}$.
  Using the Lagrangian of the problem minimizing the variation of $F$ requires weak inequalities for the constraints.
  Therefore, we relax $q_i>0$ into $q_i\geq\qmin$ for some $\qmin>0$.
  The KKT conditions lead to:
  \[
    -\frac{u_i}{1+\delta_i u_i}+p_i-\lambda_i = 0, \lambda_i\geq 0,q^{(t+1)'}_i\geq 0, \lambda_iq^{(t+1)'}_i=0
    \text{,}
  \]
  with $q^{(t+1)'}_i = q^{(t+1)}_i-\qmin$.
  These conditions are satisfied in the proposed update.
\end{proof}

$u_i$ and $r_e$ depend on $\mathbf{L}$ or $\mathbf{Q}$ and need to be updated after each update.
We use the Sherman-Morrison formula to write the update to $\bm{\Phi^{(t)}}=\left(\mathbf{Q}+\mathbf{L}\right)^{-1}$ \cite{Pavez.ASILOMAR.2019}:
\begin{align*}
  \Delta\bm{\Phi^{(t)}} &=
    \left\{
    \begin{array}{ll}
      \frac
        {\delta_e^{(t)}\left(\bm{\Phi^{(t)}}\mathbf{b_e}\right)\left(\bm{\Phi^{(t)}}\mathbf{b_e}\right)^T}
        {1+\delta_e^{(t)}\mathbf{b_e}^T\bm{\Phi^{(t)}}\mathbf{b_e}} & \text{if }w_e\text{ is updated} \\
      \frac
        {\delta_i^{(t)}[\bm{\Phi^{(t)}}]_{.i}[\bm{\Phi^{(t)}}]_{.i}^T}
        {1+\delta_i^{(t)}[\bm{\Phi^{(t)}}]_{ii}} & \text{if }q_i\text{ is updated.} \\
    \end{array}
    \right.
\end{align*}
Using $\Delta\bm{\Phi^{(t)}}=\bm{\Phi^{(t+1)}}-\bm{\Phi^{(t)}}$, the updates to $r_f$ ($f\in\mathcal{E}$), and to $q_j$ ($j\in\mathcal{V}$) can then be easily obtained through:
\begin{align*}
  \Delta\mathbf{r_f^{(t)}} &= \mathbf{b_f}^T\Delta\bm{\Phi^{(t)}}\mathbf{b_f} &
  \Delta\mathbf{u_j^{(t)}} &= [\Delta\bm{\Phi^{(t)}}]_{jj}
\end{align*}
which is efficient to implement when $\bm{\Phi^{(t)}}=\left(\mathbf{Q}+\mathbf{L}\right)^{-1}$ is kept in memory and updated after each iteration.

\textbf{Connectedness}\quad
Compared to \cite{Pavez.ASILOMAR.2019}, edge weight updates can disconnect the graph since the matrix $\mathbf{Q}+\mathbf{L}$ is always non-singular, whereas $\mathbf{L}+\frac{1}{N}\mathbf{J}$ would become singular.
As a consequence, edge weight updates are not restricted, thus allowing to fully adapt to the data.

\textbf{Stopping Criterion}\quad
Coordinate minimization is stopped whenever the maximum number of epochs ($N+M$ updates) is reached, or improvement of the cost function $F$ is below a predefined threshold after an epoch.

\subsection{Optimal Graph Weight Properties}
\label{sec:proposed:edge_prop}

Similarly to the Generalized Graph Laplacian case of \cite{Pavez.AISTATS.2022}, we can upper bound edge weights, but only between vertices whose importances are larger than $q_{\text{min}}$.

\begin{theorem}\label{thm:upper_bound}
    Let $\rho_{ij}=\mathbf{S}_{ij}/\sqrt{\mathbf{S}_{ii}\mathbf{S}_{jj}}$ be the sample correlation coefficient between vertices $i$ and $j$.
    The optimal solution of problem \eqref{eq:proposed_problem} verifies for all non zero graph weights $w_{ij}>0$ with $q_i>q_{\text{min}}$ and $q_j>q_{\text{min}}$:
    \begin{equation}
        \mathbf{w}_{ij} \leq \frac{1}{|\mathbf{S}_{ij}|} \frac{\rho_{ij}^2}{1 - \rho_{ij}^2}
    \end{equation}
\end{theorem}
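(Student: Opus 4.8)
The plan is to exploit the coordinate-wise stationarity that the optimal solution must satisfy, translate it into a statement about how the model covariance $\bm\Sigma=(\mathbf{Q}+\mathbf{L})^{-1}$ matches the sample covariance $\mathbf{S}$, and then extract $w_{ij}$ through a Schur complement. First I would observe that at the optimum every coordinate update of \autoref{thm:coord_min} must vanish. For an active edge ($w_{ij}>0$) the maximum defining $\delta_e$ cannot be attained at $-w_{ij}<0$, so it must be that $\frac{1}{h_e}-\frac{1}{r_e}=0$, i.e.\ $h_e=r_e$, which reads $\mathbf{b_e}^T\mathbf{S}\mathbf{b_e}=\mathbf{b_e}^T\bm\Sigma\mathbf{b_e}$. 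Likewise, for a vertex with $q_i>\qmin$ the maximum defining $\delta_i$ cannot be attained at $\qmin-q_i<0$, so $p_i=u_i$, i.e.\ $\mathbf{S}_{ii}=\bm\Sigma_{ii}$.

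Next I would convert these into entrywise matching conditions on the $\{i,j\}$ block. Expanding $\mathbf{b_e}^T\mathbf{S}\mathbf{b_e}=\mathbf{S}_{ii}+\mathbf{S}_{jj}-2\mathbf{S}_{ij}$ and similarly for $\bm\Sigma$, and using $\mathbf{S}_{ii}=\bm\Sigma_{ii}$ and $\mathbf{S}_{jj}=\bm\Sigma_{jj}$ (both endpoints satisfy $q>\qmin$, so each vertex stationarity condition applies), the edge condition collapses to $\mathbf{S}_{ij}=\bm\Sigma_{ij}$. Hence $\bm\Sigma$ agrees with $\mathbf{S}$ on the full $2\times2$ principal block indexed by $A=\{i,j\}$.

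Then I would recover $w_{ij}$ via the block-inverse formula. Writing $\mathbf{P}=\mathbf{Q}+\mathbf{L}$ and partitioning the vertices into $A$ and the rest $R$, we have $\bm\Sigma_{AA}=(\mathbf{P}_{AA}-\mathbf{P}_{AR}\mathbf{P}_{RR}^{-1}\mathbf{P}_{RA})^{-1}$. Inverting the now-known $2\times2$ matrix $\bm\Sigma_{AA}$ and reading off its off-diagonal entry gives $\mathbf{P}_{ij}-(\mathbf{P}_{AR}\mathbf{P}_{RR}^{-1}\mathbf{P}_{RA})_{ij}=-\mathbf{S}_{ij}/(\mathbf{S}_{ii}\mathbf{S}_{jj}-\mathbf{S}_{ij}^2)$. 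Since $\mathbf{Q}$ is diagonal and $\mathbf{L}=\mathbf{D}-\mathbf{A}$, we have $\mathbf{P}_{ij}=-w_{ij}$, so this rearranges to $w_{ij}=\mathbf{S}_{ij}/(\mathbf{S}_{ii}\mathbf{S}_{jj}-\mathbf{S}_{ij}^2)-T_{ij}$ with the correction term $T_{ij}=(\mathbf{P}_{AR}\mathbf{P}_{RR}^{-1}\mathbf{P}_{RA})_{ij}$.

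The crux, and the step I expect to be the main obstacle, is showing $T_{ij}\geq0$. Here I would use that $\mathbf{P}=\mathbf{Q}+\mathbf{L}$ is symmetric, strictly diagonally dominant (the excess of each diagonal over its off-diagonal row sum being exactly $q_i>0$), with non-positive off-diagonals $\mathbf{P}_{ik}=-w_{ik}$; it is therefore a nonsingular symmetric M-matrix, and so is its principal submatrix $\mathbf{P}_{RR}$. Consequently $\mathbf{P}_{RR}^{-1}$ is entrywise non-negative, while $\mathbf{P}_{AR}$ and $\mathbf{P}_{RA}$ are entrywise non-positive, so $T_{ij}$ is a product of a non-positive row vector, a non-negative matrix, and a non-positive column vector, whence $T_{ij}\geq0$. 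This yields $w_{ij}\leq\mathbf{S}_{ij}/(\mathbf{S}_{ii}\mathbf{S}_{jj}-\mathbf{S}_{ij}^2)$, and since $w_{ij}>0$ forces $\mathbf{S}_{ij}>0$, substituting $\rho_{ij}^2=\mathbf{S}_{ij}^2/(\mathbf{S}_{ii}\mathbf{S}_{jj})$ rewrites the right-hand side as $\frac{1}{|\mathbf{S}_{ij}|}\frac{\rho_{ij}^2}{1-\rho_{ij}^2}$, the claimed bound.
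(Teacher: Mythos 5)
Your proof is correct, and it reaches the paper's bound by a partly different (though ultimately equivalent) route. The paper works with the full KKT system of the problem rewritten in terms of $\mathbf{\Theta}=\mathbf{Q}+\mathbf{L}$, introducing multipliers $\lambda_{ij}$ for the sign constraints and $\mu_i$ for the row-sum (importance) constraints; complementary slackness ($w_{ij}>0\Rightarrow\lambda_{ij}=0$ and $q_i>\qmin\Rightarrow\mu_i=0$) reduces the stationarity equation to $(\mathbf{\Theta}^{-1})_{\mathcal{S},\mathcal{S}}=\mathbf{S}_{\mathcal{S},\mathcal{S}}$ on $\mathcal{S}=\{i,j\}$, after which the bound follows from the entrywise inequality $\mathbf{\Theta}_{\mathcal{S},\mathcal{S}}\geq\left(\mathbf{S}_{\mathcal{S},\mathcal{S}}+\mathbf{\Lambda}_{\mathcal{S},\mathcal{S}}+\mathbf{M}_{\mathcal{S},\mathcal{S}}\right)^{-1}$ imported from \cite{Pavez.AISTATS.2022}. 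You obtain the same block identity $\bm\Sigma_{\mathcal{S},\mathcal{S}}=\mathbf{S}_{\mathcal{S},\mathcal{S}}$ instead from coordinate-wise optimality of the updates in \autoref{thm:coord_min} (a global minimizer over a box-constrained convex problem must have vanishing coordinate updates, and interior coordinates cannot sit at the constraint branch of the $\max$), and you then make explicit the step the paper delegates to the citation: the Schur-complement identity $\mathbf{P}_{AA}=\mathbf{S}_{AA}^{-1}+\mathbf{P}_{AR}\mathbf{P}_{RR}^{-1}\mathbf{P}_{RA}$ combined with the M-matrix facts that $\mathbf{P}_{RR}^{-1}$ is entrywise nonnegative and $\mathbf{P}_{AR}$ entrywise nonpositive. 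The two arguments coincide at their core; what yours buys is a self-contained derivation that exposes the exact nonnegative correction term $T_{ij}$ (hence an equality characterization of $w_{ij}$, not merely an upper bound) and sidesteps the Lagrangian entirely, while the paper's KKT formulation keeps the multipliers $\lambda_{ij},\mu_i$ visible in the intermediate inequality, which it reuses in the proof of \autoref{thm:edge_set}. Two small points you rely on implicitly and could state: $\mathbf{S}_{\mathcal{S},\mathcal{S}}$ is invertible because it equals a principal block of the positive definite $\bm\Sigma$ (this also guarantees $\rho_{ij}^2<1$), and the vertex stationarity condition must hold at \emph{both} endpoints, which is exactly why the hypothesis $q_i>\qmin$ and $q_j>\qmin$ is needed.
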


\begin{proof}
    Similar to \cite[Appendix A]{Pavez.AISTATS.2022}, but using the additional constraints $q_{\text{min}}-[\mathbf{\Theta}]_{ii}\leq0$.
\end{proof}

In addition, any negative correlation between vertices leads to no edge between them, thus allowing us to remove those edges before performing coordinate minimization:

\begin{theorem}\label{thm:edge_set}
    The optimal graph satisfies:
    \begin{equation}
        \mathcal{E} = \lbrace (i,j)\in\mathcal{V}^2 : w_{ij}>0 \rbrace \subset \lbrace  (i,j) : \mathbf{S}_{ij} > 0 \rbrace.
    \end{equation}
\end{theorem}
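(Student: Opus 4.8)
The plan is to prove the equivalent contrapositive: if an edge is present in the optimal graph, i.e.\ $w_{ij}>0$, then necessarily $\mathbf{S}_{ij}>0$. The problem~\eqref{eq:proposed_problem} is jointly convex in $(\mathbf{w},\mathbf{q})$ (the map $(\mathbf{w},\mathbf{q})\mapsto\mathbf{Q}+\mathbf{L}$ is affine and $-\logdet+\trace$ is convex), so I would work directly with the first-order KKT conditions at the optimum, which I can read off from the coordinate updates of \autoref{thm:coord_min}. Writing $\bm\Sigma=(\mathbf{Q}+\mathbf{L})^{-1}$ for the model covariance and using $\mathbf{b_e}=\mathbf{e}_i-\mathbf{e}_j$ for edge $e=(i,j)$, the partial derivatives are $\partial F/\partial w_e=h_e-r_e$ and $\partial F/\partial q_i=p_i-u_i$, with $h_e=\mathbf{S}_{ii}+\mathbf{S}_{jj}-2\mathbf{S}_{ij}$, $r_e=\bm\Sigma_{ii}+\bm\Sigma_{jj}-2\bm\Sigma_{ij}$, $p_i=\mathbf{S}_{ii}$ and $u_i=\bm\Sigma_{ii}$. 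Stationarity under $w_e\geq0$ and $q_i\geq\qmin$ then yields, for an active edge $w_e>0$, the equality $h_e=r_e$, and, for \emph{every} vertex, the inequality $\mathbf{S}_{ii}\geq\bm\Sigma_{ii}$, since the multiplier of $q_i\geq\qmin$ is non-negative whether or not that constraint is tight.

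Next I would combine these two facts. Expanding the edge equality $h_e=r_e$ and substituting the two vertex inequalities for the endpoints $i$ and $j$ gives
\[
2\bigl(\mathbf{S}_{ij}-\bm\Sigma_{ij}\bigr)=\bigl(\mathbf{S}_{ii}-\bm\Sigma_{ii}\bigr)+\bigl(\mathbf{S}_{jj}-\bm\Sigma_{jj}\bigr)\geq0,
\]
so that $\mathbf{S}_{ij}\geq\bm\Sigma_{ij}$. It therefore suffices to show $\bm\Sigma_{ij}>0$ for any active edge.

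The main obstacle, and the crux of the argument, is this last positivity claim. I would establish it by noting that $\mathbf{Q}+\mathbf{L}=\mathbf{Q}+\mathbf{D}-\mathbf{A}$ has strictly positive diagonal, non-positive off-diagonal entries, and is positive definite (as $\mathbf{Q}\succ0$ and $\mathbf{L}\succeq0$), hence is a symmetric M-matrix (Stieltjes matrix). The inverse of such a matrix is entrywise non-negative, and strictly positive within each connected component of the associated graph. Since $w_{ij}>0$ places $i$ and $j$ in a common component, the corresponding block is irreducible and thus $\bm\Sigma_{ij}>0$. Chaining the inequalities yields $\mathbf{S}_{ij}\geq\bm\Sigma_{ij}>0$, proving the contrapositive and hence the stated inclusion. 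This unified argument has the further merit of closing the case $\mathbf{S}_{ij}=0$ (the magnitude bound of \autoref{thm:upper_bound} collapses to $w_{ij}\leq0$ there, but only when $q_i,q_j>\qmin$) while additionally excluding the negatively-correlated case $\mathbf{S}_{ij}<0$, which the magnitude bound alone cannot rule out.
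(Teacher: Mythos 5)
Your proof is correct, and it reaches the conclusion by a genuinely different route than the paper's. The paper reparametrizes the problem in $\mathbf{\Theta}=\mathbf{Q}+\mathbf{L}$ with constraints $\mathbf{\Theta}_{ij}\leq 0$ and $\qmin-[\mathbf{\Theta}\mathbf{1}]_i\leq 0$, writes the matrix stationarity condition $\mathbf{\Theta}^{-1}=\mathbf{S}+\mathbf{\Lambda}+\mathbf{M}$, and obtains $\mathbf{S}_{ij}\geq\mu_i+\mu_j\geq 0$ for an active edge from the entrywise non-negativity of $\mathbf{\Theta}^{-1}$ (the same $M$-matrix property you use); strictness is then forced by contradiction through the $2\times 2$ principal-submatrix inequality \eqref{eq:theta_ij_upper_bound}, which under $\lambda_{ij}=0$ and $\mu_i+\mu_j=\mathbf{S}_{ij}$ would give $\mathbf{\Theta}_{ij}\geq 0$, impossible when $w_{ij}>0$. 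You instead stay in the $(\mathbf{w},\mathbf{q})$ coordinates, combine the edge stationarity $h_e=r_e$ with the per-vertex inequalities $\mathbf{S}_{ii}\geq\bm{\Sigma}_{ii}$ to get $\mathbf{S}_{ij}\geq\bm{\Sigma}_{ij}$ --- this is the same relation the paper reads off its stationarity equation, up to a harmless rescaling of the multipliers coming from the change of variables --- and then get strictness from the fact that the inverse of an irreducible Stieltjes matrix is entrywise strictly positive, so that $\bm{\Sigma}_{ij}>0$ whenever $i$ and $j$ lie in the same connected component, in particular when $w_{ij}>0$. Both arguments therefore hinge on establishing $\bm{\Sigma}_{ij}>0$ for an active edge: yours buys it from a standard (if slightly stronger) piece of $M$-matrix theory and avoids the $2\times2$ inversion and contradiction step, which makes the strictness cleaner and arguably more transparent; the paper's version only ever needs non-negativity of the inverse plus elementary algebra on a $2\times2$ block. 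Two minor remarks: the convexity observation is not load-bearing (necessity of the KKT conditions already follows from the affine constraints), and your closing comments about the $\mathbf{S}_{ij}=0$ and $\mathbf{S}_{ij}<0$ cases are accurate commentary on the theorem's content rather than part of the proof.
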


\begin{proof}
    Using the fact that $\mathbf{\Theta}$ is a generalized Laplacian ($M$-matrix), and the KKT conditions for the strict inequality.
\end{proof}

\section{Experiments}
\label{sec:experiments}

To experimentally validate our method, we use a synthetic stochastic signal whose statistics are defined by an underlying Euclidean space.
Our goal is to show that our approach can reliably extract meaningful properties of the Euclidean space from the learnt graph, without relying on where the observations are made in Euclidean space.

\begin{figure}[b]
    \centering
    \vspace{-1em}
    \begin{tikzpicture}[every node/.style={scale=0.8}]
        \begin{axis}[%
                axis x line=bottom,
                axis y line=left,
                xmin=0,xmax=1.5,
                ymin=0,ymax=10.4,
                xlabel=vertex distance $d$,ylabel=$2\gamma(d)$,
                ylabel near ticks,
                xlabel near ticks,
                ylabel shift=-5 pt,
                legend pos=south east,
                height=4cm,width=9cm]
            \addplot[thick,domain=0:1.5,samples=300] {10*(1-exp(-x/0.01))};
            \addlegendentry{$r=0.01$}
            \addplot[dashed,thick,domain=0:1.5,samples=300] {10*(1-exp(-x/0.02))};
            \addlegendentry{$r=0.02$}
            \addplot[dotted,thick,domain=0:1.5,samples=300] {10*(1-exp(-x/0.1))};
            \addlegendentry{$r=0.1$}
            \addplot[dashdotted,thick,domain=0:1.5,samples=300] {10*(1-exp(-x/0.2))};
            \addlegendentry{$r=0.2$}
            \addplot[loosely dashdotdotted,thick,domain=0:1.5,samples=300] {10*(1-exp(-x))};
            \addlegendentry{$r=1$}
        \end{axis}
    \end{tikzpicture}
    \vspace{-0.5em}
    \caption{Variograms being considered for our experiments.}
    \label{fig:variograms}
\end{figure}
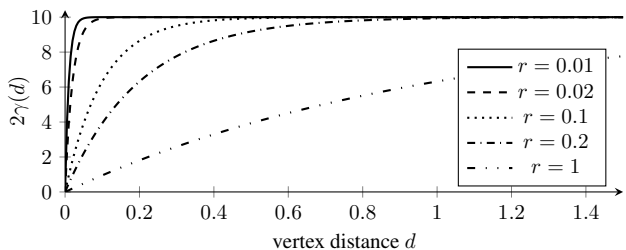

More precisely, we consider intrinsic stationary 2D signals \cite{Cressie.BOOK.1993}.
In this experiment, we use an iso\-tropic exponential variogram with no \textit{nugget} (\textit{i.e.} no measurement noise), a \textit{sill} (variance) of 10 for all data points, and varying values of the \textit{range} $r$:
\[
    2\gamma(d) = 10 \bigl(1 - \exp(-d/r)\bigr)
    \text{,}
\]
where $d$ is the Euclidean distance between any two points.
To observe the behavior of our proposed method, we choose 5 values of range (see \autoref{fig:variograms}).
Smaller ranges correspond to signals being correlated only for a short distance (almost white noise), while larger ranges being highly correlated throughout the space (almost constant signal).
A range of 0.1 corresponds to the interesting middle case where there is substantial correlation between somewhat close locations, but almost no correlation between locations far away.

We then uniformly sample the 2D space ($[0,1]\times[0,1]$ Euclidean plane) at $N=50$ locations.
To study the consistency of the learnt graph, we generate $K=50$ sets of random node locations.
For a given spatial sampling and range $r$, we obtain the exact covariance matrix $\mathbf{S}$ of the data with $\mathbf{S}_{ii}=10$ and $\mathbf{S}_{ij}=10\exp(-d_{ij}/r), \forall i\neq j$.
Using \autoref{thm:upper_bound}, we obtain the following edge weight upper bound (when $q_i>q_{\text{min}}<q_j$):
\begin{equation}
    w_{ij}\leq 0.1 \left(e^{d_{ij}/r} - e^{-d_{ij}/r}\right)^{-1}.
\end{equation}
\autoref{thm:edge_set} does not allow to remove any edge before performing coordinate minimization since $\mathbf{S}_{ij}>0$ for any two vertices $i$ and $j$.
The corresponding upper bound for \cite{Pavez.ASILOMAR.2019} is $w_{ij}\leq 0.05(1-\exp(-d_{ij}/r))^{-1}$.
These bounds are shown on \autoref{fig:bounds}.

\begin{figure}[b]
    \centering
    \vspace{-1em}
    {\protect\NoHyper
    \begin{tikzpicture}[every node/.style={scale=0.7}]
        \begin{axis}[%
                axis x line=bottom,
                axis y line=left,
                ymode=log,
                xmin=0,xmax=0.6,
                ymin=1e-6,ymax=50,
                xlabel=vertex distance $d$,
                xlabel near ticks,
                legend pos=south east,
                legend columns=5,legend transposed,
                height=5cm,width=9cm]
            \addplot[blue,ultra thick,domain=0:1.5,samples=500] {0.1/(exp(x/0.01)-exp(-x/0.01))};
            \addlegendentry{$r=0.01$ (proposed)}
            \addplot[dashed,blue,ultra thick,domain=0:1.5,samples=500] {0.1/(exp(x/0.02)-exp(-x/0.02))};
            \addlegendentry{$r=0.02$ (proposed)}
            \addplot[dotted,blue,ultra thick,domain=0:1.5,samples=500] {0.1/(exp(x/0.1)-exp(-x/0.1))};
            \addlegendentry{$r=0.1$ (proposed)}
            \addplot[dashdotted,blue,ultra thick,domain=0:1.5,samples=500] {0.1/(exp(x/0.2)-exp(-x/0.2))};
            \addlegendentry{$r=0.2$ (proposed)}
            \addplot[loosely dashdotdotted,blue,ultra thick,domain=0:1.5,samples=500] {0.1/(exp(x)-exp(-x))};
            \addlegendentry{$r=1$ (proposed)}
            \addplot[red,thin,domain=0:1.5,samples=300] {0.05/(1-exp(-x/0.01))};
            \addlegendentry{$r=0.01$ (\cite{Pavez.ASILOMAR.2019})}
            \addplot[dashed,red,thin,domain=0:1.5,samples=300] {0.05/(1-exp(-x/0.02))};
            \addlegendentry{$r=0.02$ (\cite{Pavez.ASILOMAR.2019})}
            \addplot[dotted,red,thin,domain=0:1.5,samples=300] {0.05/(1-exp(-x/0.1))};
            \addlegendentry{$r=0.1$ (\cite{Pavez.ASILOMAR.2019})}
            \addplot[dashdotted,red,thin,domain=0:1.5,samples=300] {0.05/(1-exp(-x/0.2))};
            \addlegendentry{$r=0.2$ (\cite{Pavez.ASILOMAR.2019})}
            \addplot[loosely dashdotdotted,red,thin,domain=0:1.5,samples=300] {0.05/(1-exp(-x))};
            \addlegendentry{$r=1$ (\cite{Pavez.ASILOMAR.2019})}
        \end{axis}
    \end{tikzpicture}
    \protect\endNoHyper}
    \vspace{-0.5em}
    \caption{Edge weight upper bounds for both \cite{Pavez.ASILOMAR.2019} and our proposed approach, and for varying ranges of the variogram.}
    \label{fig:bounds}
\end{figure}
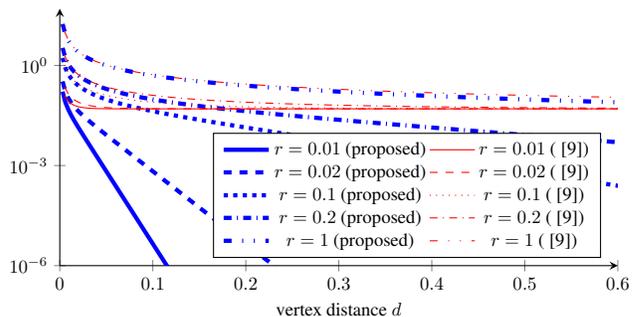

For each of these $K$ samplings, and each of the 5 ranges, we learn
\begin{inlinelist}
    \item a combinatorial Laplacian from the data covariance matrix $\mathbf{S}$  using \cite{Pavez.ASILOMAR.2019}
    \item an inner product matrix and a combinatorial Laplacian using our proposed approach
\end{inlinelist}.
In both cases, we use the same graph initialization (graph weights between vertices obtained from a Gaussian kernel of the distance between vertices, with $\sigma$ chosen as one third of the average distance) and let the algorithm run until the cost function is not changed by more than $10^{-10}$.
Averages for key metrics are shown in \autoref{tab:metrics}.

We first study how vertex importance $\mathbf{q}$ changes with range $r$.
We are interested in two key quantities: how many of the vertices get the minimum importance $q_{\text{min}}$? and what is the average importance of the remaining vertices?
We define the first quantity as $u(\mathbf{q})=|\{q_i:q_i=q_{\text{min}}\}|/N$ ($u$ for "unimportant").
\autoref{tab:metrics} shows that all vertices are important for smaller ranges, and fewer and fewer vertices are important as range increases.
Intuitively, for lower ranges, correlation between neighbors fades very quickly, as shown on \autoref{fig:variograms}, therefore, neighbors are not enough to explain the value of a vertex and all vertices are important to model the signal.
However, for larger ranges, the value on a given vertex can be accurately inferred from its neighbors values, due to the high correlation: only a small fraction of vertices are important to accurately model the whole signal.

For the vertices that are important ($q_i>q_{\text{min}}$), we also observe~\cite{SupportingMaterial} that importance is higher for smaller ranges, while decreasing with range.
The average importance $\bar{q}=\langle\{q_i:q_i>q_{\text{min}}\}\rangle$ in \autoref{tab:metrics} shows the relation between importance and range.
This follows from the cost function \autoref{eq:proposed_loglikelihood}, where compared to the cost \autoref{eq:asilomar_cost} of \cite{Pavez.ASILOMAR.2019}, $\mathbf{Q}$ captures the information of $\mathbf{S}$ not captured by $\mathbf{L}$.
For larger ranges, $\mathbf{L}$ is a good model using \cite{Pavez.ASILOMAR.2019}, and vertex importances remain low as they are not needed, while for smaller ranges, vertex importances are larger because $\mathbf{L}$ is not enough for a good model.

These observations are especially important in the context of sampling on graphs \cite{Girault.ICASSP.2020}, where the goal is to sample so as to minimize the $\mathbf{Q}$-norm of the error.
With our proposed approach, the resulting importances target specific ver\-tices whose values are important to keep to reconstruct accurately the graph signal.
%

We expect from the edge weight bounds in  \autoref{fig:bounds} that the graphs learnt with~\cite{Pavez.ASILOMAR.2019} should be denser when the signal has lower correlation, which corresponds here to lower ranges.
We define sparsity as the proportion of edges having 0 weight leading to: $\epsilon(\mathbf{w})=\|\mathbf{w}\|_0/\smash{\frac{N(N-1)}{2}}$.
This metric confirms that graphs are sparser with our proposed approach (see \autoref{tab:metrics}).

Our supplementary materials also show that edge weights are generally much smaller for the smaller ranges of 0.01 and 0.02~\cite{SupportingMaterial}, by almost 4 orders of magnitude.
We also remark that many of those edge weights violate the upper bound in \autoref{thm:upper_bound}, and are thus interpreted as numerical errors in the resolution.
Trimming edges based on this could actually lead to even sparser graphs without sacrificing precision.
We will study this in a future communication.

\begin{table}[t]
    \caption{
        Average key properties of the learnt graph between \cite{Pavez.ASILOMAR.2019} and our proposed approach.
        Bold values are better.
    }
    \vspace{-0.5em}
    \label{tab:metrics}
    \centering
    \footnotesize
    \begin{tabular}{ccccccr}
        \toprule
        Method & $r$ & $u(\mathbf{q})$ & $\bar{q}$ & $\epsilon(\mathbf{w})$ & Time \\
        \midrule
        \multirow{4}{*}{\cite{Pavez.ASILOMAR.2019}}
            & 0.01 && & 0\% & 299.2s \\
            & 0.02 && & 3.3e-3\% & 292.9s \\
            & 0.1 && & 10.6\% & 295.9s \\
            & 0.2 && & 71.0\% & 177.4s \\
            & 1 && & 90.5\% & \textbf{40.8s} \\
        \midrule
        \multirow{4}{*}{Proposed}
            & 0.01 & 0\% & 9.7e-2 & \textbf{51.8\%} & \textbf{58.7s} \\
            & 0.02 & 0\% & 9.1e-2 & \textbf{66.5\%} & \textbf{76.1s} \\
            & 0.1 & 0.3\% & 3.2e-2 & \textbf{85.1\%} & \textbf{71.2s} \\
            & 0.2 & 11\% & 1.5e-2 & \textbf{88.6\%} & \textbf{59.8s} \\
            & 1 & 75\% & 1.3e-2 & \textbf{90.7\%} & 53.2s \\
        \bottomrule
    \end{tabular}
    \vspace{-1.5em}
\end{table}

\section{Conclusions and Perspectives}

In this paper, we proposed a graph learning approach based on a different graph signal spectral model, and based on learning jointly vertex importances and edge weights.
As shown experimentally with continuous intrinsically  stationary signals, such an approach allows for dramatically sparser graphs in the low correlation regime, more interpretable weights, and vertex importances highlighting which vertices are enough to model the signal.
Adding the freedom of choosing vertex importances, this effectively lessens overfitting through a richer, more accurate, graph signal model space.
Future work will include proposing edge screening similar to \cite{Pavez.ASILOMAR.2019}, regularization, evaluation with more adverse synthetic data and real data, and theoretical guarantees.

\vspace{-1em}
\section{References}

{\def\section#1{}
\footnotesize
\bibliographystyle{IEEEbib}
\bibliography{bibliography}
}

\ifnum\eusipco=0

\clearpage

\section{Extended Proofs}

\begin{proof}[Proof of \autoref{thm:upper_bound}]
    This proof follows the approach of \cite[Appendix A]{Pavez.AISTATS.2022}, but using an additional constraint enforcing positive $q_i$.
    Let $\mathbf{\Theta}=\mathbf{Q}+\mathbf{L}$ be the precision matrix we are estimating, such that $\mathbf{\Theta}_{ii}=q_i+d_i$ and $\mathbf{\Theta}_{ij}=-w_{ij}$.
    After relaxing the constraint $q_i>0$, Problem \eqref{eq:proposed_problem} is then equivalent to:
    \begin{align*}
        \min_{\substack{
            \mathbf{\Theta}_{ij}\leq 0, i\neq j \\
            q_{\text{min}}\mathbf{1}-\mathbf{\Theta}\mathbf{1} < \mathbf{0}
        }}
        &
        -\logdet(\mathbf{\Theta})
        +\trace(\mathbf{\Theta}\mathbf{S})
        \text{,}
    \end{align*}
    where we use the property that $[\mathbf{\Theta}\mathbf{1}]_i=q_i$.
    Its Lagrangian uses the symmetric Lagrange multipliers $\lambda_{ij}$ for the constraints on $\mathbf{\Theta}_{ij}$ and $\mu_i$ for the constraint on the row sums of $\mathbf{\Theta}$, leading to the following KKT conditions:
    \begin{align*}
        -\mathbf{\Theta}^{-1} + \mathbf{S} + \mathbf{\Lambda} + \mathbf{M} &= \mathbf{0} & \mathbf{\Theta} &\succcurlyeq 0 \\
        \lambda_{ij}\mathbf{\Theta}_{ij} &= 0, \forall i\neq j & \mu_i\left(q_{\text{min}} - [\mathbf{\Theta}\mathbf{1}]_i\right) &= 0, \forall i \\
        \mathbf{\Theta}_{ij} &\leq 0, \forall i\neq j & q_{\text{min}} - [\mathbf{\Theta}\mathbf{1}]_i &\leq 0, \forall i \\
        \lambda_{ij} &\geq 0, \forall i\neq j & \mu_i &\geq 0, \forall i
    \end{align*}
    where the Lagrange multiplier matrices verify $\mathbf{\Lambda}_{ii}=0$, $\mathbf{\Lambda}_{ij}=\mathbf{\Lambda}_{ji}=\lambda_{ij}$, $\mathbf{M}_{ii}=-\mu_i$, and $\mathbf{M}_{ij}=-\mu_i-\mu_j$.
    Considering the subset of vertices $\mathcal{S}=\{i,j\}$, and using the same technique as \cite{Pavez.AISTATS.2022}, the first condition (gradient is 0) leads to:
    \begin{equation}\label{eq:theta_ij_upper_bound}
        \mathbf{\Theta}_{\mathcal{S}, \mathcal{S}} \geq \left(\mathbf{S}_{\mathcal{S}, \mathcal{S}} + \mathbf{\Lambda}_{\mathcal{S}, \mathcal{S}} + \mathbf{M}_{\mathcal{S}, \mathcal{S}}\right)^{-1}
        \text{.}
    \end{equation}
    Inverting the $2\times2$ matrix on the r.h.s, and considering an edge $ij$ such that its optimal weight $w_{ij}>0$ leads to:
    \begin{align*}
        \mathbf{\Theta}_{ij}
            &\geq \frac{-(\mathbf{S}_{ij} + \lambda_{ij} - \mu_i-\mu_j)}%
                      {(\mathbf{S}_{ii} - \mu_i)(\mathbf{S}_{jj} - \mu_j) - (\mathbf{S}_{ij} + \lambda_{ij} - \mu_i - \mu_j)^2} \\
            &\geq -\left[\frac{(\mathbf{S}_{ii} - \mu_i)(\mathbf{S}_{jj} - \mu_j)}{\mathbf{S}_{ij} - \mu_i - \mu_j} - (\mathbf{S}_{ij} - \mu_i - \mu_j)\right]^{-1} \\
            &\geq -\left[\frac{\mathbf{S}_{ii}\mathbf{S}_{jj}}{\mathbf{S}_{ij}} - \mathbf{S}_{ij}\right]^{-1}
    \end{align*}
    where the second inequality is obtained using $\lambda_{ij}=0$ since $w_{ij}>0$, and the last using $\mu_i=0=\mu_j$ since $q_i>q_{\text{min}}$ and $q_j>q_{\text{min}}$.
\end{proof}

\begin{proof}[Proof of \autoref{thm:edge_set}]
    To prove the result, we prove that for non-zero edge weight $w_{ij}>0$ of the optimal graph we have $\mu_i+\mu_j<\mathbf{S}_{ij}$.
    The result will then follow by observing that $\mu_i\geq 0$ and $\mu_j\geq 0$.
    Let $i,j$ such that $w_{ij}>0$.
    From the first equation in the KKT conditions we have that:
    \begin{equation}
        0 \leq (\mathbf{\Theta}^{-1})_{ij} = \mathbf{S}_{ij} + \lambda_{ij} - \mu_i - \mu_j.
    \end{equation}
    The inequality comes because $\mathbf{\Theta}$ is a generalized Laplacian ($M$-matrix).
    Since $\lambda_{ij}=0$, then $\mathbf{S}_{ij}  - \mu_i - \mu_j \geq 0$.
    For the strict inequality, we further assume that $\mu_i + \mu_j = \mathbf{S}_{ij}$ and show a contradiction.
    Using \eqref{eq:theta_ij_upper_bound}, we obtain:
    \[
        \mathbf{\Theta}_{ij} \geq \frac{-\lambda_{ij}}{(\mathbf{S}_{ii} - \mu_i)(\mathbf{S}_{jj} - \mu_j) - \lambda_{ij}^2}
        \text{.}
    \]
    Since $\lambda_{ij}=0$, then $0>\shortminus w_{ij}=\mathbf{\Theta}_{ij}\geq 0$, which is impossible.
\end{proof}

\fi

\end{document}